\newtheorem{theorem}{Theorem}
\newtheorem{proposition}{Proposition}
\theoremstyle{remark}
\newcommand{\parti}[2]{\frac{\partial #1}{\partial #2}}
\newcommand{\abs}[1]{\left|#1\right|}
\newcommand{\bk}[1]{\left(#1\right)}
\newcommand{\Bk}[1]{\left[#1\right]}
\newcommand{\BK}[1]{\left\{#1\right\}}
\newcommand{\norm}[1]{\lVert #1 \rVert}
\DeclareMathOperator{\trace}{tr}
\DeclareMathOperator{\real}{Re}
\DeclareMathOperator{\imag}{Im}
\begin{document}

\title{The Holevo Cram\'er-Rao bound is at most
thrice the Helstrom version}

\author{Mankei Tsang}
\email{mankei@nus.edu.sg}
\homepage{https://blog.nus.edu.sg/mankei/}
\affiliation{Department of Electrical and Computer Engineering,
  National University of Singapore, 4 Engineering Drive 3, Singapore
  117583}

\affiliation{Department of Physics, National University of Singapore,
  2 Science Drive 3, Singapore 117551}

\date{\today}


\begin{abstract}
  In quantum metrology, the Holevo Cram\'er-Rao bound has attracted
  renewed interest in recent years due to its superiority over the
  Helstrom Cram\'er-Rao bound and its asymptotic attainability for
  multiparameter estimation. Its evaluation, however, is often much
  more difficult than that of the Helstrom version, calling into
  question the actual improvement offered by the Holevo bound and
  whether it is worth the trouble. Here I prove that the Holevo bound
  is at most thrice the Helstrom version, so the improvement must be
  limited and the role of incompatibility in quantum estimation turns
  out to be modest. The result also shows that the Helstrom version
  remains a pretty good bound even for multiple parameters, as it can
  be approached asymptotically to within a factor of 3.

  \textbf{Update}: References~\cite{carollo19,carollo20,tsang20}
  supersede this work by proving that the Holevo Cram\'er-Rao bound
  is, in fact, at most twice the Helstrom version and the factor-of-2
  bound can be tight.
\end{abstract}

\maketitle

For any measurement of a quantum system and any unbiased estimator, a
quantum generalization of the Cram\'er-Rao bound (CRB)---first
proposed by Helstrom in 1967 \cite{helstrom}---can be expressed as
\cite{hayashi05}
\begin{align}
\trace G \Sigma &\ge C^{S} \equiv \min_{X \in \mathcal X} \trace \real Q(X),
\label{helstrom}
\\
Q(X) &\equiv \sqrt{G} Z(X)\sqrt{G},
\quad
Z_{\mu\nu}(X) \equiv \trace \rho X_\mu X_\nu,
\end{align}
where $\Sigma$ is the error covariance matrix, $G$ is a real
 and positive-semidefinite cost matrix, $\rho$ is the
density operator of the quantum system that depends on $n$ real
unknown parameters $\theta = (\theta_1,\theta_2,\dots,\theta_n)$,
$\mathcal X$ is the set of all vectoral Hermitian operators
$X = (X_1,X_2,\dots,X_n)$ that satisfy
$\trace X_\mu\partial \rho/\partial \theta_\nu = \delta_{\mu\nu}$, and
the real part of a matrix is defined by
$(\real Q)_{\mu\nu} = \real(Q_{\mu\nu}) =
[Q_{\mu\nu}+(Q_{\mu\nu})^*]/2$.  The original form of $C^{S}$ in terms
of the symmetric logarithmic derivatives of $\rho$
\cite{helstrom,hayashi05} is a closed-form solution of
Eq.~(\ref{helstrom}). The Helstrom CRB serves as a fundamental limit
to quantum estimation and has found many applications in quantum
metrology \cite{glm2011,demkowicz15,pirandola18,pezze18,tsang19a}.

Despite the popularity of the Helstrom CRB, better bounds exist
\cite{holevo11,personick71,yuen_lax,tsuda05,qzzb,glm2012,hall_prx,qbzzb,qwwb,nair18,rubio19,
  rubio20}. In particular, Holevo proposed a bound that can be
expressed as \cite{holevo11,hayashi05,gill_guta,yamagata13}
\begin{align}
  \trace G \Sigma &\ge C^{H} \ge \max\BK{C^{S},C^{R}},
  \\
  C^{H} &\equiv \min_{X \in \mathcal X} \Bk{\trace\real Q(X)+ \norm{\imag Q(X)}_1},
\label{hcrb}
\end{align}
where $C^{R}$ is another CRB due to Yuen and Lax \cite{yuen_lax} that
is not elaborated here, the imaginary part of a matrix is defined by
$(\imag Q)_{\mu\nu} = \imag(Q_{\mu\nu}) =
[Q_{\mu\nu}-(Q_{\mu\nu})^*]/(2i)$, the trace norm is defined as
$\norm{A}_1 \equiv \trace \sqrt{A^\dagger A}$, and $\dagger$ denotes
the conjugate transpose.  When there are multiple parameters, the
Holevo CRB $C^{H}$ is not only tighter but also attainable
asymptotically \cite{gill_guta,yamagata13,yang19}, as it accounts
properly for any incompatibility of the observables that should be
measured. The bound has attracted renewed interest in recent years
\cite{gill_guta,matsumoto02,hayashi05,yamagata13,ragy16,szczykulska16,bradshaw17,bradshaw18,yang19,albarelli19,suzuki19},
as many applications involve multiple unknown parameters and the
effect of incompatibility is of both fundamental and practical
interest.

Despite the fundamental importance of the Holevo CRB, its evaluation
is difficult and daunting numerics is often needed.  This is in
contrast to the more amenable Helstrom CRB, for which many fruitful
computation techniques have been devised over the years
\cite{helstrom,hayashi05,glm2011,demkowicz15,pirandola18,pezze18,tsang19a,hayashi,paris,escher,twc,guta11,tsang_open,alipour,ng16,yuan17a,sidhu20,genoni19,tsang19c}. For
researchers who are reluctant to undertake the endeavor, this raises
the questions how much improvement the Holevo CRB can actually offer
and whether it is worth the trouble after all.  The following theorem
gives a concrete answer.
\begin{theorem}
$C^{H} \le 3 C^{S}$.
\label{thrice}
\end{theorem}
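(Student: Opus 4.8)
\emph{Proof strategy.} The plan is to exploit a single structural fact---that $Q(X)$ is Hermitian and positive semidefinite for \emph{every} $X\in\mathcal X$---and to observe that positivity alone forces the incompatibility term $\norm{\imag Q(X)}_1$ to be no larger than the Helstrom term $\trace\real Q(X)$; feeding the Helstrom-optimal estimator into~(\ref{hcrb}) then finishes the argument. The first step is essentially free: $Z_{\mu\nu}(X)=\trace\rho X_\mu X_\nu$ is of Gram type, since for any $c\in\mathbb C^n$ one has $\sum_{\mu\nu}c_\mu^*c_\nu Z_{\mu\nu}(X)=\trace(\rho A^\dagger A)\ge0$ with $A\equiv\sum_\mu c_\mu X_\mu$ (using $\sum_\mu c_\mu^* X_\mu=A^\dagger$, as each $X_\mu$ is Hermitian). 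Hence $Z(X)\succeq0$, and congruence with $\sqrt G$ preserves positivity, so $Q(X)=\sqrt G Z(X)\sqrt G\succeq0$; in particular $\real Q(X)\succeq0$ (test against real vectors) and $\trace\real Q(X)=\trace Q(X)=\norm{Q(X)}_1\ge0$.

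The crux is the inequality $\norm{\imag Q(X)}_1\le\trace\real Q(X)$ for any Hermitian positive-semidefinite $Q\equiv Q(X)$. I would argue it in one line: Hermiticity gives $\overline Q=Q^\top$, hence $\imag Q=\tfrac{1}{2i}(Q-Q^\top)$, and the triangle inequality together with the invariance of the trace norm under transposition yields $\norm{\imag Q}_1\le\tfrac12(\norm{Q}_1+\norm{Q^\top}_1)=\norm{Q}_1=\trace\real Q$. (An alternative avoiding the transpose identity: rotate the real antisymmetric matrix $\imag Q$ to block-diagonal form by an orthogonal congruence, which alters neither $\trace\real Q$ nor $\norm{\imag Q}_1$, and then apply the $2\times2$ positivity minors of $Q$ block by block, each giving $2\lambda_j\le Q_{2j-1,2j-1}+Q_{2j,2j}$ in that frame.)

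Combining these, the theorem follows with room to spare: $\mathcal X$ is the common feasible set of the two variational problems, so inserting any $X$ that is (near-)optimal for $C^S$ into~(\ref{hcrb}) gives $C^H\le\trace\real Q(X)+\norm{\imag Q(X)}_1\le2\trace\real Q(X)$, and optimizing over $X$ yields $C^H\le2C^S$---stronger than the stated bound, and consistent with the Update. I do not anticipate a genuine obstacle here: every step is soft once one notices $Q(X)\succeq0$, and the only care needed is with infinite-dimensional trace-class bookkeeping and the trivial case $C^S=\infty$. If anything, the delicate part is the constant: a less economical handling of $\norm{\imag Q(X)}_1$---for instance passing to the symmetric-logarithmic-derivative representation and bounding the resulting weighted antisymmetric matrix with a loose matrix H\"older inequality---is exactly the sort of detour that costs a factor and lands one at the stated $3$ rather than at $2$.
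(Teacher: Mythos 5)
Your argument is correct, and it in fact establishes more than the statement: you get $C^H \le 2C^S$. The overall skeleton is the same as the paper's --- show $Q(X)\succeq 0$, bound the incompatibility term $\norm{\imag Q(X)}_1$ by a multiple of $\trace\real Q(X)$, and then insert the Helstrom-optimal $X^S$ into Eq.~(\ref{hcrb}) --- but the key inequality differs. The paper decomposes $i\imag Q = Q - \real Q$ and applies the triangle inequality to \emph{both} terms, $\norm{\imag Q}_1 \le \norm{Q}_1 + \norm{\real Q}_1 = 2\trace\real Q$, which is what produces the factor of $3$ (not an SLD/H\"older detour, as you speculate). You instead write $\imag Q = (Q-\overline{Q})/(2i) = (Q - Q^\top)/(2i)$ using Hermiticity, and exploit the invariance of the trace norm under transposition/conjugation to get $\norm{\imag Q}_1 \le \tfrac12\bk{\norm{Q}_1 + \norm{Q^\top}_1} = \norm{Q}_1 = \trace Q = \trace\real Q$; your block-diagonalization of the real antisymmetric $\imag Q$ with the $2\times2$ positivity minors is an equally valid alternative. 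Either way you land on the factor-of-$2$ bound, which is precisely the sharpening credited to Refs.~\cite{carollo19,carollo20,tsang20} in the paper's Update, so your route buys the tighter constant at essentially no extra cost, while the paper's route is marginally more economical but looser. The only housekeeping points (which you flag yourself) are to phrase the last step with a near-minimizer if the infimum in Eq.~(\ref{helstrom}) is not attained, and to note the trivial case $C^S=\infty$; neither affects correctness.
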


\begin{proof}
  For any $X$, it can be shown that $\sqrt{G}$, $Z$, $Q$, and
  $\real Q$ are positive-semidefinite, $\sqrt{G}$, $\real Q$, and
  $\imag Q$ are real, $i\imag Q$ is Hermitian, and $\imag Q$ is
  skew-symmetric. With
\begin{align}
Q &= \real Q + i \imag Q,
&
i\imag Q &= Q - \real Q,
\end{align}
one can derive an uncertainty relation given by
\begin{align}
\norm{\imag Q}_1 &= \norm{i\imag Q}_1
= \norm{Q - \real Q}_1
\le \norm{Q}_1 + \norm{\real Q}_1
\nonumber\\
&= \trace Q + \trace \real Q = 2 \trace \real Q,
\label{ineq}
\end{align}
where the triangle inequality is used, $\norm{Q}_1 = \trace Q$ and
$\norm{\real Q}_1 = \trace \real Q$ because $Q$ and $\real Q$ are
positive-semidefinite, and
$\trace Q = \trace \real Q + i \trace \imag Q = \trace \real Q$
because $\imag Q$ is skew-symmetric. Now write the Helstrom CRB as
\begin{align}
C^{S} &= \trace \real Q(X^{S}), 
\label{helstrom2}
\end{align}
where $X^{S}$ is the element in $\mathcal X$ that minimizes
$\trace \real Q(X)$ in Eq.~(\ref{helstrom}).  Combining
Eqs.~(\ref{hcrb}), (\ref{ineq}), and (\ref{helstrom2}), one obtains
\begin{align}
C^{H} &\le \trace\real Q(X^{S}) + \norm{\imag Q(X^{S})}_1
\label{CH_bound}
\\
&\le 3 \trace\real Q(X^{S}) = 3 C^{S}.
\end{align}
\end{proof}

Theorem~\ref{thrice} puts the Holevo CRB in the sandwich
\begin{align}
\max\BK{C^{S},C^{R}} \le C^{H} \le 3C^{S},
\end{align}
and researchers can now decide for themselves whether an improvement
by at most a factor of 3 warrants the extra effort of evaluating
$C^{H}$. The theorem may even be on the generous side, as numerical
results often show that the improvement is less than a factor of 2
\cite{bradshaw17,bradshaw18,albarelli19}. On the flip side,
Theorem~\ref{thrice}, together with the asymptotic attainability of
$C^{H}$ \cite{gill_guta,yamagata13,yang19}, implies that $C^{S}$ is
asymptotically approachable to within a factor of 3, so the Helstrom
CRB turns out to be a pretty good bound after all, even for multiple
parameters.

The bound in Theorem~\ref{thrice} can be further tightened in special
cases. Here I consider the cases where $G$ is rank-one or rank-two.
\begin{proposition}
If $G$ is rank-one, $C^H = C^S$.
\label{rankone}
\end{proposition}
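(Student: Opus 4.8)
The plan is to show that the extra term $\norm{\imag Q(X)}_1$ appearing in Eq.~(\ref{hcrb}) vanishes identically for every $X \in \mathcal X$ once $G$ is rank-one, so that the minimization defining $C^{H}$ collapses to the one defining $C^{S}$. Since $C^{H} \ge C^{S}$ holds in general, establishing $C^{H} \le C^{S}$ in this way completes the argument.

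First I would put the rank-one cost matrix in the form $G = \kappa\, w w^\top$ with $\kappa > 0$ and $w \in \mathbb R^n$ normalized to $\norm{w} = 1$. Because $(w w^\top)^2 = w w^\top$, the positive-semidefinite square root is $\sqrt G = \sqrt{\kappa}\, w w^\top$, which is itself rank-one and real. Consequently, for any $X \in \mathcal X$,
\begin{align}
Q(X) = \sqrt G\, Z(X)\, \sqrt G = \kappa\,\bk{w^\top Z(X) w}\, w w^\top .
\end{align}
The next step is to check that the scalar prefactor is real. Setting $Y \equiv \sum_\mu w_\mu X_\mu$, which is Hermitian since $w$ is real and each $X_\mu$ is Hermitian, one has $w^\top Z(X) w = \sum_{\mu\nu} w_\mu w_\nu \trace(\rho X_\mu X_\nu) = \trace(\rho Y^2)$, and this is real (in fact nonnegative) because $\rho$ and $Y^2$ are positive-semidefinite. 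Hence $Q(X)$ is a real scalar times the real symmetric matrix $w w^\top$, so $Q(X)$ is real and symmetric, giving $\imag Q(X) = 0$ and $\norm{\imag Q(X)}_1 = 0$ for every $X$.

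Substituting this into Eq.~(\ref{hcrb}) yields $C^{H} = \min_{X \in \mathcal X}\trace\real Q(X) = C^{S}$. I do not expect a genuine obstacle here; the only point needing a moment's care is confirming that the quadratic form $w^\top Z(X) w$ is real, which is immediate once it is recognized as $\trace\rho Y^2$ for the Hermitian combination $Y = \sum_\mu w_\mu X_\mu$.
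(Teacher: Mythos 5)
Your argument is correct and is essentially the paper's own proof: both write the rank-one $G$ as a positive scalar times a real rank-one projector, observe $\sqrt G$ is again rank-one and real, conclude $Q(X)$ is a real nonnegative scalar times $e e^\top$ (you verify the scalar is real via $\trace\rho Y^2$, the paper via $Z\ge 0$, which amounts to the same thing), and hence $\imag Q=0$ so the Holevo and Helstrom minimizations coincide. Nothing further is needed.
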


\begin{proof}
 A rank-one $G$ can be expressed as $G = g e e^\top$,
where $g$ is its real and positive eigenvalue,
$e$ is the real unit eigenvector, and $\top$ denotes
the transpose. Then 
\begin{align}
\sqrt{G} &= \sqrt{g} e e^\top,
&
Q &=\sqrt{G} Z \sqrt{G} =  g(e^\top Z e) e e^\top.
\end{align}
Since $Z \ge 0$, $e^\top Z e$ is real and nonnegative, meaning that
$Q$ is real and $\imag Q = 0$. The $C^H$ given by Eq.~(\ref{hcrb}) is
hence equal to the $C^S$ given by Eq.~(\ref{helstrom}).
\end{proof}

Note that a rank-one $G$ is not the same as the case of $n = 1$
unknown parameter. To be specific, let $\beta(\theta)$ be a scalar
parameter of interest that depends on the $n$ unknown parameters
$\theta$.  For example, if $\beta(\theta) = \theta_1$, then the rest
$(\theta_2,\dots,\theta_n)$ are nuisance parameters, which may hamper
the estimation. A bound on the error of estimating $\beta$ in the
presence of the many unknowns can be obtained by assuming
\begin{align}
G_{\mu\nu} &= \parti{\beta}{\theta_\mu}\parti{\beta}{\theta_\nu}.
\end{align}
It follows that
\begin{align}
Q &= \bk{\trace \rho Y^2} e e^\top,
\quad
Y = \sum_{\mu=1}^n \parti{\beta}{\theta_\mu}X_\mu,
\\
C^H &= C^S = \min_{Y \in \mathcal Y} \trace \rho Y^2,
\end{align}
where $\mathcal Y$ is the set of all Hermitian operators that satisfy
the constraints
\begin{align}
\trace Y \parti{\rho}{\theta_\mu} &= \parti{\beta}{\theta_\mu},
\quad \mu = 1,2,\dots,n.
\end{align}
This formulation of $C^S$ is then equivalent to the one in
Ref.~\cite{tsang19c} for semiparametric estimation.  For the problems
studied in Ref.~\cite{tsang19c}, Proposition~\ref{rankone} implies
that the Holevo CRB offers no improvement and the Helstrom CRBs
computed there are asymptotically attainable, at least when $n$ is
finite.

\begin{proposition}
If $G$ is rank-two, $C^H \le 2 C^S$.
\label{ranktwo}
\end{proposition}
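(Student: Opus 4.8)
The plan is to bound the uncertainty term $\norm{\imag Q(X^S)}_1$ by $C^S=\trace\real Q(X^S)$ itself, rather than by $2\trace\real Q(X^S)$ as in the generic estimate Eq.~(\ref{ineq}); substituting such a bound into Eq.~(\ref{CH_bound}) then immediately gives $C^H\le 2C^S$. The improvement over the generic argument comes from the fact that a rank-two $G$ confines $Q$, $\real Q$, and $\imag Q$ to a two-dimensional subspace, where a skew-symmetric matrix has only one independent entry.

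First I would diagonalize $G=g_1 e_1 e_1^\top+g_2 e_2 e_2^\top$ with $g_1,g_2>0$ and $e_1,e_2$ real orthonormal eigenvectors, so that $\sqrt G=P\sqrt G P$ with $P\equiv e_1 e_1^\top+e_2 e_2^\top$ the projector onto $V\equiv\spn\{e_1,e_2\}$. Then $Q=\sqrt G Z\sqrt G=PQP$, and since $P$ is real, conjugation by $P$ commutes with taking real and imaginary parts, so that $\real Q=P(\real Q)P$ and $\imag Q=P(\imag Q)P$. Hence all three matrices act trivially outside $V$ and may be regarded as $2\times2$ matrices on $V$.

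Next, because $\imag Q$ is real and skew-symmetric (as established in the proof of Theorem~\ref{thrice}), in the basis $\{e_1,e_2\}$ it equals $\begin{pmatrix}0&a\\-a&0\end{pmatrix}$ for some $a\in\mathbb R$, whose singular values are $\abs a,\abs a$, so $\norm{\imag Q}_1=2\abs a$. Writing $\real Q=\begin{pmatrix}p&r\\r&q\end{pmatrix}$ in the same basis (with $p,q\ge0$ since $\real Q\ge0$), the restriction of $Q=\real Q+i\imag Q$ to $V$ is the Hermitian matrix $\begin{pmatrix}p&r+ia\\r-ia&q\end{pmatrix}$; positivity of $Q$ forces $\det Q=pq-r^2-a^2\ge0$, hence $a^2\le pq\le[(p+q)/2]^2$ by the AM--GM inequality, i.e.\ $2\abs a\le p+q=\trace\real Q$. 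Therefore $\norm{\imag Q(X)}_1\le\trace\real Q(X)$ for every $X\in\mathcal X$; evaluating this at $X^S$ and combining with Eq.~(\ref{CH_bound}) gives $C^H\le2\trace\real Q(X^S)=2C^S$.

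I expect the only point requiring genuine care to be the claim that $\real Q$ and $\imag Q$ inherit the support of $Q$ on $V$, so that the problem really reduces to a $2\times2$ one; once that is pinned down the remainder is an elementary determinant-and-AM--GM computation. The same bookkeeping also makes clear what changes for higher rank: for general rank-$r$ $G$ one must instead control the paired singular values of an $r\times r$ real skew-symmetric matrix against the spectrum of the corresponding $r\times r$ positive block of $\real Q$, which is the route one would take to sharpen the constant in other low-rank cases.
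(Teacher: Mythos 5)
Your proof is correct, and while its skeleton matches the paper's (bound $\norm{\imag Q(X^{S})}_1$ by $\trace\real Q(X^{S})$ and insert into Eq.~(\ref{CH_bound})), the key inequality is obtained by a genuinely different route. The paper works with the operator structure, writing $Q_{jk}=\trace\rho Y_jY_k$ for $Y_j=\sqrt{g_j}\sum_\mu e^j_\mu X_\mu$, so that $\norm{\imag Q}_1=\abs{i\trace\rho[Y_1,Y_2]}$, and then cites an operator uncertainty relation (Holevo's Proposition~2.8.3) to get $\abs{i\trace\rho[Y_1,Y_2]}\le \trace\rho Y_1^2+\trace\rho Y_2^2=\trace\real Q$. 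You never use the operators at all: you compress to the two-dimensional range of $G$ via $Q=PQP$ (and, as you rightly flag, $\real Q=P(\real Q)P$, $\imag Q=P(\imag Q)P$ because $P$ is real), and then extract $2\abs{a}\le\trace\real Q$ from the determinant condition $pq-r^2-a^2\ge 0$ of the positive-semidefinite $2\times 2$ block together with AM--GM. The two arguments are mathematically equivalent here---positivity of the Gram-type block $\trace\rho Y_jY_k$ \emph{is} the uncertainty relation---but your version is self-contained, needs only $Q\ge 0$, and isolates the purely matrix-analytic fact that the trace norm of the real skew-symmetric imaginary part of a positive-semidefinite matrix is controlled by the trace of its real part. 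That is precisely the mechanism behind the later general result $C^H\le 2C^S$ for arbitrary rank mentioned in the abstract's update, so your approach points toward the sharpening of Theorem~\ref{thrice} in a way the paper's rank-two argument (tied to a two-operator commutator bound) does not immediately suggest.
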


\begin{proof}
  Let the positive eigenvalues of a rank-$r$ $G$ be
  $\{g_j:j = 1,2,\dots,r\}$ and the corresponding real unit
  eigenvectors be $\{e^j: j = 1,2,\dots,r\}$. Then
\begin{align}
Q &= \sum_{j=1}^r\sum_{k=1}^r \bk{\trace \rho Y_j Y_k}e^j e^{k\top},
&
Y_j &= \sqrt{g_j} \sum_{\mu=1}^n e^j_{\mu} X_\mu.
\end{align}
If $r = 2$,
\begin{align}
\imag Q &= (\imag \trace \rho Y_1Y_2) \bk{e^1 e^{2\top}- e^2 e^{1\top}},
\\
\norm{\imag Q}_1 &= 2\abs{\imag \trace \rho Y_1Y_2} 
= \abs{i\trace \rho \Bk{Y_1,Y_2}}
\\
&\le \trace \rho Y_1^2 + \trace \rho Y_2^2 = \trace \real Q,
\end{align}
where the inequality comes from
Ref.~\cite[Proposition~2.8.3]{holevo11}.  Hence
\begin{align}
C^{H} &\le \trace\real Q(X^{S}) + \norm{\imag Q(X^{S})}_1
\\
&\le 2 \trace\real Q(X^{S}) = 2 C^{S}.
\end{align}
\end{proof}
Considering only the propositions, one might suspect that $C^H$ could
become significantly higher for a $G$ with a higher rank, but
Theorem~\ref{thrice} settles the general case by imposing a hard limit
for any rank, revealing the surprisingly modest role of
incompatibility in asymptotic quantum estimation.  It remains an open
question whether Theorem~\ref{thrice} can be improved further and a
tighter upper bound on $C^H$ can be found.

Discussions with Francesco Albarelli, Richard Gill, and Madalin Guta
are gratefully acknowledged. This work is supported by the Singapore
National Research Foundation under Project No.~QEP-P7.
\bibliography{research2}

\begin{thebibliography}{44}%
\makeatletter
\providecommand \@ifxundefined [1]{%
 \@ifx{#1\undefined}
}%
\providecommand \@ifnum [1]{%
 \ifnum #1\expandafter \@firstoftwo
 \else \expandafter \@secondoftwo
 \fi
}%
\providecommand \@ifx [1]{%
 \ifx #1\expandafter \@firstoftwo
 \else \expandafter \@secondoftwo
 \fi
}%
\providecommand \natexlab [1]{#1}%
\providecommand \enquote  [1]{``#1''}%
\providecommand \bibnamefont  [1]{#1}%
\providecommand \bibfnamefont [1]{#1}%
\providecommand \citenamefont [1]{#1}%
\providecommand \href@noop [0]{\@secondoftwo}%
\providecommand \href [0]{\begingroup \@sanitize@url \@href}%
\providecommand \@href[1]{\@@startlink{#1}\@@href}%
\providecommand \@@href[1]{\endgroup#1\@@endlink}%
\providecommand \@sanitize@url [0]{\catcode `\\12\catcode `\$12\catcode
  `\&12\catcode `\#12\catcode `\^12\catcode `\_12\catcode `\%12\relax}%
\providecommand \@@startlink[1]{}%
\providecommand \@@endlink[0]{}%
\providecommand \url  [0]{\begingroup\@sanitize@url \@url }%
\providecommand \@url [1]{\endgroup\@href {#1}{\urlprefix }}%
\providecommand \urlprefix  [0]{URL }%
\providecommand \Eprint [0]{\href }%
\providecommand \doibase [0]{http://dx.doi.org/}%
\providecommand \selectlanguage [0]{\@gobble}%
\providecommand \bibinfo  [0]{\@secondoftwo}%
\providecommand \bibfield  [0]{\@secondoftwo}%
\providecommand \translation [1]{[#1]}%
\providecommand \BibitemOpen [0]{}%
\providecommand \bibitemStop [0]{}%
\providecommand \bibitemNoStop [0]{.\EOS\space}%
\providecommand \EOS [0]{\spacefactor3000\relax}%
\providecommand \BibitemShut  [1]{\csname bibitem#1\endcsname}%
\let\auto@bib@innerbib\@empty
\bibitem [{\citenamefont {Carollo}\ \emph {et~al.}(2019)\citenamefont
  {Carollo}, \citenamefont {Spagnolo}, \citenamefont {Dubkov},\ and\
  \citenamefont {Valenti}}]{carollo19}%
  \BibitemOpen
  \bibfield  {author} {\bibinfo {author} {\bibfnamefont {Angelo}\ \bibnamefont
  {Carollo}}, \bibinfo {author} {\bibfnamefont {Bernardo}\ \bibnamefont
  {Spagnolo}}, \bibinfo {author} {\bibfnamefont {Alexander~A.}\ \bibnamefont
  {Dubkov}}, \ and\ \bibinfo {author} {\bibfnamefont {Davide}\ \bibnamefont
  {Valenti}},\ }\bibfield  {title} {\enquote {\bibinfo {title} {On quantumness
  in multi-parameter quantum estimation},}\ }\href {\doibase
  10.1088/1742-5468/ab3ccb} {\bibfield  {journal} {\bibinfo  {journal} {Journal
  of Statistical Mechanics: Theory and Experiment}\ }\textbf {\bibinfo {volume}
  {2019}},\ \bibinfo {pages} {094010} (\bibinfo {year} {2019})}\BibitemShut
  {NoStop}%
\bibitem [{\citenamefont {Carollo}\ \emph {et~al.}(2020)\citenamefont
  {Carollo}, \citenamefont {Spagnolo}, \citenamefont {Dubkov},\ and\
  \citenamefont {Valenti}}]{carollo20}%
  \BibitemOpen
  \bibfield  {author} {\bibinfo {author} {\bibfnamefont {Angelo}\ \bibnamefont
  {Carollo}}, \bibinfo {author} {\bibfnamefont {Bernardo}\ \bibnamefont
  {Spagnolo}}, \bibinfo {author} {\bibfnamefont {Alexander~A.}\ \bibnamefont
  {Dubkov}}, \ and\ \bibinfo {author} {\bibfnamefont {Davide}\ \bibnamefont
  {Valenti}},\ }\bibfield  {title} {\enquote {\bibinfo {title} {Erratum: {On}
  quantumness in multi-parameter quantum estimation (2019 {J}. {Stat}. {Mech}.
  094010)},}\ }\href {\doibase 10.1088/1742-5468/ab6f5e} {\bibfield  {journal}
  {\bibinfo  {journal} {Journal of Statistical Mechanics: Theory and
  Experiment}\ }\textbf {\bibinfo {volume} {2020}},\ \bibinfo {pages} {029902}
  (\bibinfo {year} {2020})}\BibitemShut {NoStop}%
\bibitem [{\citenamefont {Tsang}\ \emph {et~al.}(2020)\citenamefont {Tsang},
  \citenamefont {Albarelli},\ and\ \citenamefont {Datta}}]{tsang20}%
  \BibitemOpen
  \bibfield  {author} {\bibinfo {author} {\bibfnamefont {Mankei}\ \bibnamefont
  {Tsang}}, \bibinfo {author} {\bibfnamefont {Francesco}\ \bibnamefont
  {Albarelli}}, \ and\ \bibinfo {author} {\bibfnamefont {Animesh}\ \bibnamefont
  {Datta}},\ }\bibfield  {title} {\enquote {\bibinfo {title} {Quantum
  {Semiparametric} {Estimation}},}\ }\href {\doibase
  10.1103/PhysRevX.10.031023} {\bibfield  {journal} {\bibinfo  {journal}
  {Physical Review X}\ }\textbf {\bibinfo {volume} {10}},\ \bibinfo {pages}
  {031023} (\bibinfo {year} {2020})}\BibitemShut {NoStop}%
\bibitem [{\citenamefont {Helstrom}(1976)}]{helstrom}%
  \BibitemOpen
  \bibfield  {author} {\bibinfo {author} {\bibfnamefont {Carl~W.}\ \bibnamefont
  {Helstrom}},\ }\href
  {http://www.sciencedirect.com/science/bookseries/00765392/123} {\emph
  {\bibinfo {title} {Quantum Detection and Estimation Theory}}}\ (\bibinfo
  {publisher} {Academic Press},\ \bibinfo {address} {New York},\ \bibinfo
  {year} {1976})\BibitemShut {NoStop}%
\bibitem [{\citenamefont {Hayashi}(2005)}]{hayashi05}%
  \BibitemOpen
  \bibinfo {editor} {\bibfnamefont {Masahito}\ \bibnamefont {Hayashi}},\ ed.,\
  \href {\doibase 10.1142/9789812563071} {\emph {\bibinfo {title} {Asymptotic
  Theory of Quantum Statistical Inference: Selected Papers}}}\ (\bibinfo
  {publisher} {World Scientific},\ \bibinfo {address} {Singapore},\ \bibinfo
  {year} {2005})\BibitemShut {NoStop}%
\bibitem [{\citenamefont {Giovannetti}\ \emph {et~al.}(2011)\citenamefont
  {Giovannetti}, \citenamefont {Lloyd},\ and\ \citenamefont
  {Maccone}}]{glm2011}%
  \BibitemOpen
  \bibfield  {author} {\bibinfo {author} {\bibfnamefont {Vittorio}\
  \bibnamefont {Giovannetti}}, \bibinfo {author} {\bibfnamefont {Seth}\
  \bibnamefont {Lloyd}}, \ and\ \bibinfo {author} {\bibfnamefont {Lorenzo}\
  \bibnamefont {Maccone}},\ }\bibfield  {title} {\enquote {\bibinfo {title}
  {Advances in quantum metrology},}\ }\href
  {http://dx.doi.org/10.1038/nphoton.2011.35} {\bibfield  {journal} {\bibinfo
  {journal} {Nature Photonics}\ }\textbf {\bibinfo {volume} {5}},\ \bibinfo
  {pages} {222--229} (\bibinfo {year} {2011})}\BibitemShut {NoStop}%
\bibitem [{\citenamefont {Demkowicz-Dobrza{\'n}ski}\ \emph
  {et~al.}(2015)\citenamefont {Demkowicz-Dobrza{\'n}ski}, \citenamefont
  {Jarzyna},\ and\ \citenamefont {Ko{\l}ody{\'n}ski}}]{demkowicz15}%
  \BibitemOpen
  \bibfield  {author} {\bibinfo {author} {\bibfnamefont {Rafa{\l}}\
  \bibnamefont {Demkowicz-Dobrza{\'n}ski}}, \bibinfo {author} {\bibfnamefont
  {Marcin}\ \bibnamefont {Jarzyna}}, \ and\ \bibinfo {author} {\bibfnamefont
  {Jan}\ \bibnamefont {Ko{\l}ody{\'n}ski}},\ }\bibfield  {title} {\enquote
  {\bibinfo {title} {{Quantum} {Limits} in {Optical} {Interferometry}},}\ }in\
  \href {\doibase 10.1016/bs.po.2015.02.003} {\emph {\bibinfo {booktitle}
  {Progress in {Optics}}}},\ Vol.~\bibinfo {volume} {60},\ \bibinfo {editor}
  {edited by\ \bibinfo {editor} {\bibfnamefont {E.}~\bibnamefont {Wolf}}}\
  (\bibinfo  {publisher} {Elsevier},\ \bibinfo {address} {Amsterdam},\ \bibinfo
  {year} {2015})\ Chap.~\bibinfo {chapter} {4}, pp.\ \bibinfo {pages}
  {345--435}\BibitemShut {NoStop}%
\bibitem [{\citenamefont {Pirandola}\ \emph {et~al.}(2018)\citenamefont
  {Pirandola}, \citenamefont {Bardhan}, \citenamefont {Gehring}, \citenamefont
  {Weedbrook},\ and\ \citenamefont {Lloyd}}]{pirandola18}%
  \BibitemOpen
  \bibfield  {author} {\bibinfo {author} {\bibfnamefont {S.}~\bibnamefont
  {Pirandola}}, \bibinfo {author} {\bibfnamefont {B.~R.}\ \bibnamefont
  {Bardhan}}, \bibinfo {author} {\bibfnamefont {T.}~\bibnamefont {Gehring}},
  \bibinfo {author} {\bibfnamefont {C.}~\bibnamefont {Weedbrook}}, \ and\
  \bibinfo {author} {\bibfnamefont {S.}~\bibnamefont {Lloyd}},\ }\bibfield
  {title} {\enquote {\bibinfo {title} {Advances in photonic quantum sensing},}\
  }\href {\doibase 10.1038/s41566-018-0301-6} {\bibfield  {journal} {\bibinfo
  {journal} {Nature Photonics}\ }\textbf {\bibinfo {volume} {12}},\ \bibinfo
  {pages} {724} (\bibinfo {year} {2018})}\BibitemShut {NoStop}%
\bibitem [{\citenamefont {Pezz{\'e}}\ \emph {et~al.}(2018)\citenamefont
  {Pezz{\'e}}, \citenamefont {Smerzi}, \citenamefont {Oberthaler},
  \citenamefont {Schmied},\ and\ \citenamefont {Treutlein}}]{pezze18}%
  \BibitemOpen
  \bibfield  {author} {\bibinfo {author} {\bibfnamefont {Luca}\ \bibnamefont
  {Pezz{\'e}}}, \bibinfo {author} {\bibfnamefont {Augusto}\ \bibnamefont
  {Smerzi}}, \bibinfo {author} {\bibfnamefont {Markus~K.}\ \bibnamefont
  {Oberthaler}}, \bibinfo {author} {\bibfnamefont {Roman}\ \bibnamefont
  {Schmied}}, \ and\ \bibinfo {author} {\bibfnamefont {Philipp}\ \bibnamefont
  {Treutlein}},\ }\bibfield  {title} {\enquote {\bibinfo {title} {Quantum
  metrology with nonclassical states of atomic ensembles},}\ }\href {\doibase
  10.1103/RevModPhys.90.035005} {\bibfield  {journal} {\bibinfo  {journal}
  {Reviews of Modern Physics}\ }\textbf {\bibinfo {volume} {90}},\ \bibinfo
  {pages} {035005} (\bibinfo {year} {2018})}\BibitemShut {NoStop}%
\bibitem [{\citenamefont {Tsang}(2019{\natexlab{a}})}]{tsang19a}%
  \BibitemOpen
  \bibfield  {author} {\bibinfo {author} {\bibfnamefont {Mankei}\ \bibnamefont
  {Tsang}},\ }\bibfield  {title} {\enquote {\bibinfo {title} {Resolving
  starlight: a quantum perspective},}\ }\href {\doibase
  10.1080/00107514.2020.1736375} {\bibfield  {journal} {\bibinfo  {journal}
  {Contemporary Physics}\ }\textbf {\bibinfo {volume} {60}},\ \bibinfo {pages}
  {279--298} (\bibinfo {year} {2019}{\natexlab{a}})}\BibitemShut {NoStop}%
\bibitem [{\citenamefont {Holevo}(2011)}]{holevo11}%
  \BibitemOpen
  \bibfield  {author} {\bibinfo {author} {\bibfnamefont {Alexander~S.}\
  \bibnamefont {Holevo}},\ }\href@noop {} {\emph {\bibinfo {title}
  {Probabilistic and Statistical Aspects of Quantum Theory}}}\ (\bibinfo
  {publisher} {Edizioni della Normale},\ \bibinfo {address} {Pisa, Italy},\
  \bibinfo {year} {2011})\BibitemShut {NoStop}%
\bibitem [{\citenamefont {Personick}(1971)}]{personick71}%
  \BibitemOpen
  \bibfield  {author} {\bibinfo {author} {\bibfnamefont {S.}~\bibnamefont
  {Personick}},\ }\bibfield  {title} {\enquote {\bibinfo {title} {Application
  of quantum estimation theory to analog communication over quantum
  channels},}\ }\href {\doibase 10.1109/TIT.1971.1054643} {\bibfield  {journal}
  {\bibinfo  {journal} {IEEE Transactions on Information Theory}\ }\textbf
  {\bibinfo {volume} {17}},\ \bibinfo {pages} {240--246} (\bibinfo {year}
  {1971})}\BibitemShut {NoStop}%
\bibitem [{\citenamefont {Yuen}\ and\ \citenamefont {Lax}(1973)}]{yuen_lax}%
  \BibitemOpen
  \bibfield  {author} {\bibinfo {author} {\bibfnamefont {Horace~P.}\
  \bibnamefont {Yuen}}\ and\ \bibinfo {author} {\bibfnamefont {Melvin}\
  \bibnamefont {Lax}},\ }\bibfield  {title} {\enquote {\bibinfo {title}
  {Multiple-parameter quantum estimation and measurement of nonselfadjoint
  observables},}\ }\href {\doibase 10.1109/TIT.1973.1055103} {\bibfield
  {journal} {\bibinfo  {journal} {IEEE Transactions on Information Theory}\
  }\textbf {\bibinfo {volume} {19}},\ \bibinfo {pages} {740--750} (\bibinfo
  {year} {1973})}\BibitemShut {NoStop}%
\bibitem [{\citenamefont {Tsuda}\ and\ \citenamefont
  {Matsumoto}(2005)}]{tsuda05}%
  \BibitemOpen
  \bibfield  {author} {\bibinfo {author} {\bibfnamefont {Yoshiyuki}\
  \bibnamefont {Tsuda}}\ and\ \bibinfo {author} {\bibfnamefont {Keiji}\
  \bibnamefont {Matsumoto}},\ }\bibfield  {title} {\enquote {\bibinfo {title}
  {Quantum estimation for non-differentiable models},}\ }\href {\doibase
  10.1088/0305-4470/38/7/014} {\bibfield  {journal} {\bibinfo  {journal}
  {Journal of Physics A: Mathematical and General}\ }\textbf {\bibinfo {volume}
  {38}},\ \bibinfo {pages} {1593} (\bibinfo {year} {2005})}\BibitemShut
  {NoStop}%
\bibitem [{\citenamefont {Tsang}(2012)}]{qzzb}%
  \BibitemOpen
  \bibfield  {author} {\bibinfo {author} {\bibfnamefont {Mankei}\ \bibnamefont
  {Tsang}},\ }\bibfield  {title} {\enquote {\bibinfo {title} {Ziv-{Zakai}
  {Error} {Bounds} for {Quantum} {Parameter} {Estimation}},}\ }\href {\doibase
  10.1103/PhysRevLett.108.230401} {\bibfield  {journal} {\bibinfo  {journal}
  {Physical Review Letters}\ }\textbf {\bibinfo {volume} {108}},\ \bibinfo
  {pages} {230401} (\bibinfo {year} {2012})}\BibitemShut {NoStop}%
\bibitem [{\citenamefont {Giovannetti}\ \emph {et~al.}(2012)\citenamefont
  {Giovannetti}, \citenamefont {Lloyd},\ and\ \citenamefont
  {Maccone}}]{glm2012}%
  \BibitemOpen
  \bibfield  {author} {\bibinfo {author} {\bibfnamefont {Vittorio}\
  \bibnamefont {Giovannetti}}, \bibinfo {author} {\bibfnamefont {Seth}\
  \bibnamefont {Lloyd}}, \ and\ \bibinfo {author} {\bibfnamefont {Lorenzo}\
  \bibnamefont {Maccone}},\ }\bibfield  {title} {\enquote {\bibinfo {title}
  {Quantum measurement bounds beyond the uncertainty relations},}\ }\href
  {\doibase 10.1103/PhysRevLett.108.260405} {\bibfield  {journal} {\bibinfo
  {journal} {Physical Review Letters}\ }\textbf {\bibinfo {volume} {108}},\
  \bibinfo {pages} {260405} (\bibinfo {year} {2012})}\BibitemShut {NoStop}%
\bibitem [{\citenamefont {Hall}\ and\ \citenamefont
  {Wiseman}(2012)}]{hall_prx}%
  \BibitemOpen
  \bibfield  {author} {\bibinfo {author} {\bibfnamefont {Michael J.~W.}\
  \bibnamefont {Hall}}\ and\ \bibinfo {author} {\bibfnamefont {Howard~M.}\
  \bibnamefont {Wiseman}},\ }\bibfield  {title} {\enquote {\bibinfo {title}
  {{Does Nonlinear Metrology Offer Improved Resolution? Answers from Quantum
  Information Theory}},}\ }\href {\doibase 10.1103/PhysRevX.2.041006}
  {\bibfield  {journal} {\bibinfo  {journal} {Physical Review X}\ }\textbf
  {\bibinfo {volume} {2}},\ \bibinfo {pages} {041006} (\bibinfo {year}
  {2012})}\BibitemShut {NoStop}%
\bibitem [{\citenamefont {Berry}\ \emph {et~al.}(2015)\citenamefont {Berry},
  \citenamefont {Tsang}, \citenamefont {Hall},\ and\ \citenamefont
  {Wiseman}}]{qbzzb}%
  \BibitemOpen
  \bibfield  {author} {\bibinfo {author} {\bibfnamefont {Dominic~W.}\
  \bibnamefont {Berry}}, \bibinfo {author} {\bibfnamefont {Mankei}\
  \bibnamefont {Tsang}}, \bibinfo {author} {\bibfnamefont {Michael J.~W.}\
  \bibnamefont {Hall}}, \ and\ \bibinfo {author} {\bibfnamefont {Howard~M.}\
  \bibnamefont {Wiseman}},\ }\bibfield  {title} {\enquote {\bibinfo {title}
  {Quantum {Bell}-{Ziv}-{Zakai} {Bounds} and {Heisenberg} {Limits} for
  {Waveform} {Estimation}},}\ }\href {\doibase 10.1103/PhysRevX.5.031018}
  {\bibfield  {journal} {\bibinfo  {journal} {Physical Review X}\ }\textbf
  {\bibinfo {volume} {5}},\ \bibinfo {pages} {031018} (\bibinfo {year}
  {2015})}\BibitemShut {NoStop}%
\bibitem [{\citenamefont {Lu}\ and\ \citenamefont {Tsang}(2016)}]{qwwb}%
  \BibitemOpen
  \bibfield  {author} {\bibinfo {author} {\bibfnamefont {Xiao-Ming}\
  \bibnamefont {Lu}}\ and\ \bibinfo {author} {\bibfnamefont {Mankei}\
  \bibnamefont {Tsang}},\ }\bibfield  {title} {\enquote {\bibinfo {title}
  {Quantum {Weiss}-{Weinstein} bounds for quantum metrology},}\ }\href
  {\doibase 10.1088/2058-9565/1/1/015002} {\bibfield  {journal} {\bibinfo
  {journal} {Quantum Science and Technology}\ }\textbf {\bibinfo {volume}
  {1}},\ \bibinfo {pages} {015002} (\bibinfo {year} {2016})}\BibitemShut
  {NoStop}%
\bibitem [{\citenamefont {Nair}(2018)}]{nair18}%
  \BibitemOpen
  \bibfield  {author} {\bibinfo {author} {\bibfnamefont {Ranjith}\ \bibnamefont
  {Nair}},\ }\bibfield  {title} {\enquote {\bibinfo {title} {Fundamental
  quantum limits in optical metrology from rate-distortion theory},}\ }\href
  {\doibase 10.1088/1751-8121/aade9f} {\bibfield  {journal} {\bibinfo
  {journal} {Journal of Physics A: Mathematical and Theoretical}\ }\textbf
  {\bibinfo {volume} {51}},\ \bibinfo {pages} {434001} (\bibinfo {year}
  {2018})}\BibitemShut {NoStop}%
\bibitem [{\citenamefont {Rubio}\ and\ \citenamefont
  {Dunningham}(2019)}]{rubio19}%
  \BibitemOpen
  \bibfield  {author} {\bibinfo {author} {\bibfnamefont {Jes{\'u}s}\
  \bibnamefont {Rubio}}\ and\ \bibinfo {author} {\bibfnamefont {Jacob}\
  \bibnamefont {Dunningham}},\ }\bibfield  {title} {\enquote {\bibinfo {title}
  {Quantum metrology in the presence of limited data},}\ }\href {\doibase
  10.1088/1367-2630/ab098b} {\bibfield  {journal} {\bibinfo  {journal} {New
  Journal of Physics}\ }\textbf {\bibinfo {volume} {21}},\ \bibinfo {pages}
  {043037} (\bibinfo {year} {2019})}\BibitemShut {NoStop}%
\bibitem [{\citenamefont {Rubio}\ and\ \citenamefont
  {Dunningham}(2020)}]{rubio20}%
  \BibitemOpen
  \bibfield  {author} {\bibinfo {author} {\bibfnamefont {Jesús}\ \bibnamefont
  {Rubio}}\ and\ \bibinfo {author} {\bibfnamefont {Jacob}\ \bibnamefont
  {Dunningham}},\ }\bibfield  {title} {\enquote {\bibinfo {title} {Bayesian
  multiparameter quantum metrology with limited data},}\ }\href {\doibase
  10.1103/PhysRevA.101.032114} {\bibfield  {journal} {\bibinfo  {journal}
  {Physical Review A}\ }\textbf {\bibinfo {volume} {101}},\ \bibinfo {pages}
  {032114} (\bibinfo {year} {2020})}\BibitemShut {NoStop}%
\bibitem [{\citenamefont {Gill}\ and\ \citenamefont {{Gu{\c t}{\u
  a}}}(2013)}]{gill_guta}%
  \BibitemOpen
  \bibfield  {author} {\bibinfo {author} {\bibfnamefont {Richard~D.}\
  \bibnamefont {Gill}}\ and\ \bibinfo {author} {\bibfnamefont {M{\u a}d{\u
  a}lin}\ \bibnamefont {{Gu{\c t}{\u a}}}},\ }\enquote {\bibinfo {title} {On
  asymptotic quantum statistical inference},}\ in\ \href {\doibase
  10.1214/12-IMSCOLL909} {\emph {\bibinfo {booktitle} {From Probability to
  Statistics and Back: High-Dimensional Models and Processes -- A Festschrift
  in Honor of Jon A. Wellner}}},\ \bibinfo {series} {Collections},
  Vol.~\bibinfo {volume} {9},\ \bibinfo {editor} {edited by\ \bibinfo {editor}
  {\bibfnamefont {M.}~\bibnamefont {Banerjee}}, \bibinfo {editor}
  {\bibfnamefont {F.}~\bibnamefont {Bunea}}, \bibinfo {editor} {\bibfnamefont
  {J.}~\bibnamefont {Huang}}, \bibinfo {editor} {\bibfnamefont
  {V.}~\bibnamefont {Koltchinskii}}, \ and\ \bibinfo {editor} {\bibfnamefont
  {M.~H.}\ \bibnamefont {Maathuis}}}\ (\bibinfo  {publisher} {Institute of
  Mathematical Statistics},\ \bibinfo {address} {Beachwood, Ohio, USA},\
  \bibinfo {year} {2013})\ pp.\ \bibinfo {pages} {105--127}\BibitemShut
  {NoStop}%
\bibitem [{\citenamefont {Yamagata}\ \emph {et~al.}(2013)\citenamefont
  {Yamagata}, \citenamefont {Fujiwara},\ and\ \citenamefont
  {Gill}}]{yamagata13}%
  \BibitemOpen
  \bibfield  {author} {\bibinfo {author} {\bibfnamefont {Koichi}\ \bibnamefont
  {Yamagata}}, \bibinfo {author} {\bibfnamefont {Akio}\ \bibnamefont
  {Fujiwara}}, \ and\ \bibinfo {author} {\bibfnamefont {Richard~D.}\
  \bibnamefont {Gill}},\ }\bibfield  {title} {\enquote {\bibinfo {title}
  {Quantum local asymptotic normality based on a new quantum likelihood
  ratio},}\ }\href {\doibase 10.1214/13-AOS1147} {\bibfield  {journal}
  {\bibinfo  {journal} {The Annals of Statistics}\ }\textbf {\bibinfo {volume}
  {41}},\ \bibinfo {pages} {2197--2217} (\bibinfo {year} {2013})}\BibitemShut
  {NoStop}%
\bibitem [{\citenamefont {Yang}\ \emph {et~al.}(2019)\citenamefont {Yang},
  \citenamefont {Chiribella},\ and\ \citenamefont {Hayashi}}]{yang19}%
  \BibitemOpen
  \bibfield  {author} {\bibinfo {author} {\bibfnamefont {Yuxiang}\ \bibnamefont
  {Yang}}, \bibinfo {author} {\bibfnamefont {Giulio}\ \bibnamefont
  {Chiribella}}, \ and\ \bibinfo {author} {\bibfnamefont {Masahito}\
  \bibnamefont {Hayashi}},\ }\bibfield  {title} {\enquote {\bibinfo {title}
  {Attaining the {Ultimate} {Precision} {Limit} in {Quantum} {State}
  {Estimation}},}\ }\href {\doibase 10.1007/s00220-019-03433-4} {\bibfield
  {journal} {\bibinfo  {journal} {Communications in Mathematical Physics}\
  }\textbf {\bibinfo {volume} {368}},\ \bibinfo {pages} {223--293} (\bibinfo
  {year} {2019})}\BibitemShut {NoStop}%
\bibitem [{\citenamefont {Matsumoto}(2002)}]{matsumoto02}%
  \BibitemOpen
  \bibfield  {author} {\bibinfo {author} {\bibfnamefont {K.}~\bibnamefont
  {Matsumoto}},\ }\bibfield  {title} {\enquote {\bibinfo {title} {A new
  approach to the {Cram\'er}-{Rao}-type bound of the pure-state model},}\
  }\href {\doibase 10.1088/0305-4470/35/13/307} {\bibfield  {journal} {\bibinfo
   {journal} {Journal of Physics A: Mathematical and General}\ }\textbf
  {\bibinfo {volume} {35}},\ \bibinfo {pages} {3111--3123} (\bibinfo {year}
  {2002})}\BibitemShut {NoStop}%
\bibitem [{\citenamefont {Ragy}\ \emph {et~al.}(2016)\citenamefont {Ragy},
  \citenamefont {Jarzyna},\ and\ \citenamefont
  {Demkowicz-Dobrzański}}]{ragy16}%
  \BibitemOpen
  \bibfield  {author} {\bibinfo {author} {\bibfnamefont {Sammy}\ \bibnamefont
  {Ragy}}, \bibinfo {author} {\bibfnamefont {Marcin}\ \bibnamefont {Jarzyna}},
  \ and\ \bibinfo {author} {\bibfnamefont {Rafał}\ \bibnamefont
  {Demkowicz-Dobrzański}},\ }\bibfield  {title} {\enquote {\bibinfo {title}
  {Compatibility in multiparameter quantum metrology},}\ }\href {\doibase
  10.1103/PhysRevA.94.052108} {\bibfield  {journal} {\bibinfo  {journal}
  {Physical Review A}\ }\textbf {\bibinfo {volume} {94}},\ \bibinfo {pages}
  {052108} (\bibinfo {year} {2016})}\BibitemShut {NoStop}%
\bibitem [{\citenamefont {Szczykulska}\ \emph {et~al.}(2016)\citenamefont
  {Szczykulska}, \citenamefont {Baumgratz},\ and\ \citenamefont
  {Datta}}]{szczykulska16}%
  \BibitemOpen
  \bibfield  {author} {\bibinfo {author} {\bibfnamefont {Magdalena}\
  \bibnamefont {Szczykulska}}, \bibinfo {author} {\bibfnamefont {Tillmann}\
  \bibnamefont {Baumgratz}}, \ and\ \bibinfo {author} {\bibfnamefont {Animesh}\
  \bibnamefont {Datta}},\ }\bibfield  {title} {\enquote {\bibinfo {title}
  {Multi-parameter quantum metrology},}\ }\href {\doibase
  10.1080/23746149.2016.1230476} {\bibfield  {journal} {\bibinfo  {journal}
  {Advances in Physics: X}\ }\textbf {\bibinfo {volume} {1}},\ \bibinfo {pages}
  {621--639} (\bibinfo {year} {2016})}\BibitemShut {NoStop}%
\bibitem [{\citenamefont {Bradshaw}\ \emph {et~al.}(2017)\citenamefont
  {Bradshaw}, \citenamefont {Assad},\ and\ \citenamefont {Lam}}]{bradshaw17}%
  \BibitemOpen
  \bibfield  {author} {\bibinfo {author} {\bibfnamefont {Mark}\ \bibnamefont
  {Bradshaw}}, \bibinfo {author} {\bibfnamefont {Syed~M.}\ \bibnamefont
  {Assad}}, \ and\ \bibinfo {author} {\bibfnamefont {Ping~Koy}\ \bibnamefont
  {Lam}},\ }\bibfield  {title} {\enquote {\bibinfo {title} {A tight
  {Cram\'er}–{Rao} bound for joint parameter estimation with a pure two-mode
  squeezed probe},}\ }\href {\doibase 10.1016/j.physleta.2017.06.024}
  {\bibfield  {journal} {\bibinfo  {journal} {Physics Letters A}\ }\textbf
  {\bibinfo {volume} {381}},\ \bibinfo {pages} {2598--2607} (\bibinfo {year}
  {2017})}\BibitemShut {NoStop}%
\bibitem [{\citenamefont {Bradshaw}\ \emph {et~al.}(2018)\citenamefont
  {Bradshaw}, \citenamefont {Lam},\ and\ \citenamefont {Assad}}]{bradshaw18}%
  \BibitemOpen
  \bibfield  {author} {\bibinfo {author} {\bibfnamefont {Mark}\ \bibnamefont
  {Bradshaw}}, \bibinfo {author} {\bibfnamefont {Ping~Koy}\ \bibnamefont
  {Lam}}, \ and\ \bibinfo {author} {\bibfnamefont {Syed~M.}\ \bibnamefont
  {Assad}},\ }\bibfield  {title} {\enquote {\bibinfo {title} {Ultimate
  precision of joint quadrature parameter estimation with a {Gaussian}
  probe},}\ }\href {\doibase 10.1103/PhysRevA.97.012106} {\bibfield  {journal}
  {\bibinfo  {journal} {Physical Review A}\ }\textbf {\bibinfo {volume} {97}},\
  \bibinfo {pages} {012106} (\bibinfo {year} {2018})}\BibitemShut {NoStop}%
\bibitem [{\citenamefont {Albarelli}\ \emph {et~al.}(2019)\citenamefont
  {Albarelli}, \citenamefont {Friel},\ and\ \citenamefont
  {Datta}}]{albarelli19}%
  \BibitemOpen
  \bibfield  {author} {\bibinfo {author} {\bibfnamefont {Francesco}\
  \bibnamefont {Albarelli}}, \bibinfo {author} {\bibfnamefont {Jamie~F.}\
  \bibnamefont {Friel}}, \ and\ \bibinfo {author} {\bibfnamefont {Animesh}\
  \bibnamefont {Datta}},\ }\bibfield  {title} {\enquote {\bibinfo {title}
  {Evaluating the {Holevo} {Cram\'er}-{Rao} {Bound} for {Multiparameter}
  {Quantum} {Metrology}},}\ }\href {\doibase 10.1103/PhysRevLett.123.200503}
  {\bibfield  {journal} {\bibinfo  {journal} {Physical Review Letters}\
  }\textbf {\bibinfo {volume} {123}},\ \bibinfo {pages} {200503} (\bibinfo
  {year} {2019})}\BibitemShut {NoStop}%
\bibitem [{\citenamefont {Suzuki}\ \emph {et~al.}(2020)\citenamefont {Suzuki},
  \citenamefont {Yang},\ and\ \citenamefont {Hayashi}}]{suzuki19}%
  \BibitemOpen
  \bibfield  {author} {\bibinfo {author} {\bibfnamefont {Jun}\ \bibnamefont
  {Suzuki}}, \bibinfo {author} {\bibfnamefont {Yuxiang}\ \bibnamefont {Yang}},
  \ and\ \bibinfo {author} {\bibfnamefont {Masahito}\ \bibnamefont {Hayashi}},\
  }\bibfield  {title} {\enquote {\bibinfo {title} {Quantum state estimation
  with nuisance parameters},}\ }\href {http://arxiv.org/abs/1911.02790}
  {\bibfield  {journal} {\bibinfo  {journal} {arXiv:1911.02790 [quant-ph]}\ }
  (\bibinfo {year} {2020})}\BibitemShut {NoStop}%
\bibitem [{\citenamefont {Hayashi}(2017)}]{hayashi}%
  \BibitemOpen
  \bibfield  {author} {\bibinfo {author} {\bibfnamefont {Masahito}\
  \bibnamefont {Hayashi}},\ }\href {\doibase 10.1007/978-3-662-49725-8} {\emph
  {\bibinfo {title} {Quantum {I}nformation {T}heory: {M}athematical
  {F}oundation}}},\ \bibinfo {edition} {2nd}\ ed.\ (\bibinfo  {publisher}
  {Springer},\ \bibinfo {address} {Berlin},\ \bibinfo {year}
  {2017})\BibitemShut {NoStop}%
\bibitem [{\citenamefont {Paris}(2009)}]{paris}%
  \BibitemOpen
  \bibfield  {author} {\bibinfo {author} {\bibfnamefont {Matteo G.~A.}\
  \bibnamefont {Paris}},\ }\bibfield  {title} {\enquote {\bibinfo {title}
  {Quantum estimation for quantum technology},}\ }\href {\doibase
  10.1142/S0219749909004839} {\bibfield  {journal} {\bibinfo  {journal}
  {International Journal of Quantum Information}\ }\textbf {\bibinfo {volume}
  {07}},\ \bibinfo {pages} {125--137} (\bibinfo {year} {2009})}\BibitemShut
  {NoStop}%
\bibitem [{\citenamefont {Escher}\ \emph {et~al.}(2011)\citenamefont {Escher},
  \citenamefont {de~Matos~Filho},\ and\ \citenamefont {Davidovich}}]{escher}%
  \BibitemOpen
  \bibfield  {author} {\bibinfo {author} {\bibfnamefont {B.~M.}\ \bibnamefont
  {Escher}}, \bibinfo {author} {\bibfnamefont {R.~L.}\ \bibnamefont
  {de~Matos~Filho}}, \ and\ \bibinfo {author} {\bibfnamefont {L.}~\bibnamefont
  {Davidovich}},\ }\bibfield  {title} {\enquote {\bibinfo {title} {General
  framework for estimating the ultimate precision limit in noisy
  quantum-enhanced metrology},}\ }\href {\doibase 10.1038/nphys1958} {\bibfield
   {journal} {\bibinfo  {journal} {Nature Physics}\ }\textbf {\bibinfo {volume}
  {7}},\ \bibinfo {pages} {406--411} (\bibinfo {year} {2011})}\BibitemShut
  {NoStop}%
\bibitem [{\citenamefont {Tsang}\ \emph {et~al.}(2011)\citenamefont {Tsang},
  \citenamefont {Wiseman},\ and\ \citenamefont {Caves}}]{twc}%
  \BibitemOpen
  \bibfield  {author} {\bibinfo {author} {\bibfnamefont {Mankei}\ \bibnamefont
  {Tsang}}, \bibinfo {author} {\bibfnamefont {Howard~M.}\ \bibnamefont
  {Wiseman}}, \ and\ \bibinfo {author} {\bibfnamefont {Carlton~M.}\
  \bibnamefont {Caves}},\ }\bibfield  {title} {\enquote {\bibinfo {title}
  {Fundamental quantum limit to waveform estimation},}\ }\href {\doibase
  10.1103/PhysRevLett.106.090401} {\bibfield  {journal} {\bibinfo  {journal}
  {Physical Review Letters}\ }\textbf {\bibinfo {volume} {106}},\ \bibinfo
  {pages} {090401} (\bibinfo {year} {2011})}\BibitemShut {NoStop}%
\bibitem [{\citenamefont {{Gu{\c t}{\u a}}}(2011)}]{guta11}%
  \BibitemOpen
  \bibfield  {author} {\bibinfo {author} {\bibfnamefont {M{\u a}d{\u a}lin}\
  \bibnamefont {{Gu{\c t}{\u a}}}},\ }\bibfield  {title} {\enquote {\bibinfo
  {title} {Fisher information and asymptotic normality in system identification
  for quantum {Markov} chains},}\ }\href {\doibase 10.1103/PhysRevA.83.062324}
  {\bibfield  {journal} {\bibinfo  {journal} {Physical Review A}\ }\textbf
  {\bibinfo {volume} {83}},\ \bibinfo {pages} {062324} (\bibinfo {year}
  {2011})}\BibitemShut {NoStop}%
\bibitem [{\citenamefont {Tsang}(2013)}]{tsang_open}%
  \BibitemOpen
  \bibfield  {author} {\bibinfo {author} {\bibfnamefont {Mankei}\ \bibnamefont
  {Tsang}},\ }\bibfield  {title} {\enquote {\bibinfo {title} {Quantum metrology
  with open dynamical systems},}\ }\href {\doibase
  10.1088/1367-2630/15/7/073005} {\bibfield  {journal} {\bibinfo  {journal}
  {New Journal of Physics}\ }\textbf {\bibinfo {volume} {15}},\ \bibinfo
  {pages} {073005} (\bibinfo {year} {2013})}\BibitemShut {NoStop}%
\bibitem [{\citenamefont {Alipour}\ and\ \citenamefont
  {Rezakhani}(2015)}]{alipour}%
  \BibitemOpen
  \bibfield  {author} {\bibinfo {author} {\bibfnamefont {S.}~\bibnamefont
  {Alipour}}\ and\ \bibinfo {author} {\bibfnamefont {A.~T.}\ \bibnamefont
  {Rezakhani}},\ }\bibfield  {title} {\enquote {\bibinfo {title} {Extended
  convexity of quantum fisher information in quantum metrology},}\ }\href
  {\doibase 10.1103/PhysRevA.91.042104} {\bibfield  {journal} {\bibinfo
  {journal} {Physical Review A}\ }\textbf {\bibinfo {volume} {91}},\ \bibinfo
  {pages} {042104} (\bibinfo {year} {2015})}\BibitemShut {NoStop}%
\bibitem [{\citenamefont {Ng}\ \emph {et~al.}(2016)\citenamefont {Ng},
  \citenamefont {Ang}, \citenamefont {Wheatley}, \citenamefont {Yonezawa},
  \citenamefont {Furusawa}, \citenamefont {Huntington},\ and\ \citenamefont
  {Tsang}}]{ng16}%
  \BibitemOpen
  \bibfield  {author} {\bibinfo {author} {\bibfnamefont {Shilin}\ \bibnamefont
  {Ng}}, \bibinfo {author} {\bibfnamefont {Shan~Zheng}\ \bibnamefont {Ang}},
  \bibinfo {author} {\bibfnamefont {Trevor~A.}\ \bibnamefont {Wheatley}},
  \bibinfo {author} {\bibfnamefont {Hidehiro}\ \bibnamefont {Yonezawa}},
  \bibinfo {author} {\bibfnamefont {Akira}\ \bibnamefont {Furusawa}}, \bibinfo
  {author} {\bibfnamefont {Elanor~H.}\ \bibnamefont {Huntington}}, \ and\
  \bibinfo {author} {\bibfnamefont {Mankei}\ \bibnamefont {Tsang}},\ }\bibfield
   {title} {\enquote {\bibinfo {title} {Spectrum analysis with quantum
  dynamical systems},}\ }\href {\doibase 10.1103/PhysRevA.93.042121} {\bibfield
   {journal} {\bibinfo  {journal} {Physical Review A}\ }\textbf {\bibinfo
  {volume} {93}},\ \bibinfo {pages} {042121} (\bibinfo {year}
  {2016})}\BibitemShut {NoStop}%
\bibitem [{\citenamefont {Yuan}\ and\ \citenamefont {Fung}(2017)}]{yuan17a}%
  \BibitemOpen
  \bibfield  {author} {\bibinfo {author} {\bibfnamefont {Haidong}\ \bibnamefont
  {Yuan}}\ and\ \bibinfo {author} {\bibfnamefont {Chi-Hang~Fred}\ \bibnamefont
  {Fung}},\ }\bibfield  {title} {\enquote {\bibinfo {title} {Quantum metrology
  matrix},}\ }\href {\doibase 10.1103/PhysRevA.96.012310} {\bibfield  {journal}
  {\bibinfo  {journal} {Physical Review A}\ }\textbf {\bibinfo {volume} {96}},\
  \bibinfo {pages} {012310} (\bibinfo {year} {2017})}\BibitemShut {NoStop}%
\bibitem [{\citenamefont {Sidhu}\ and\ \citenamefont {Kok}(2020)}]{sidhu20}%
  \BibitemOpen
  \bibfield  {author} {\bibinfo {author} {\bibfnamefont {Jasminder~S.}\
  \bibnamefont {Sidhu}}\ and\ \bibinfo {author} {\bibfnamefont {Pieter}\
  \bibnamefont {Kok}},\ }\bibfield  {title} {\enquote {\bibinfo {title}
  {{Geometric} {perspective} on {quantum} {parameter} {estimation}},}\ }\href
  {\doibase 10.1116/1.5119961} {\bibfield  {journal} {\bibinfo  {journal} {AVS
  Quantum Science}\ }\textbf {\bibinfo {volume} {2}},\ \bibinfo {pages}
  {014701} (\bibinfo {year} {2020})}\BibitemShut {NoStop}%
\bibitem [{\citenamefont {Genoni}\ and\ \citenamefont
  {Tufarelli}(2019)}]{genoni19}%
  \BibitemOpen
  \bibfield  {author} {\bibinfo {author} {\bibfnamefont {Marco~G.}\
  \bibnamefont {Genoni}}\ and\ \bibinfo {author} {\bibfnamefont {Tommaso}\
  \bibnamefont {Tufarelli}},\ }\bibfield  {title} {\enquote {\bibinfo {title}
  {Non-orthogonal bases for quantum metrology},}\ }\href {\doibase
  10.1088/1751-8121/ab3fe0} {\bibfield  {journal} {\bibinfo  {journal} {Journal
  of Physics A: Mathematical and Theoretical}\ }\textbf {\bibinfo {volume}
  {52}},\ \bibinfo {pages} {434002} (\bibinfo {year} {2019})}\BibitemShut
  {NoStop}%
\bibitem [{\citenamefont {Tsang}(2019{\natexlab{b}})}]{tsang19c}%
  \BibitemOpen
  \bibfield  {author} {\bibinfo {author} {\bibfnamefont {Mankei}\ \bibnamefont
  {Tsang}},\ }\bibfield  {title} {\enquote {\bibinfo {title} {Quantum analogs
  of the conditional expectation for retrodiction and smoothing: a unified
  view},}\ }\href {http://arxiv.org/abs/1912.02711} {\bibfield  {journal}
  {\bibinfo  {journal} {arXiv:1912.02711 [quant-ph]}\ } (\bibinfo {year}
  {2019}{\natexlab{b}})}\BibitemShut {NoStop}%
\end{thebibliography}%

\end{document}